\def\?[#1]{\textbf{[#1]}\marginpar{\Large{\textbf{??}}}}
\let\epsilon=\varepsilon 
\let\phi=\varphi
\newcommand{\HH}{{\mathbb H}}
\newcommand{\RR}{{\mathbb R}}
\newcommand{\ZZ}{{\mathbb Z}}
\newcommand{\NN}{{\mathbb N}}
\newcommand{\CC}{{\mathbb C}}
\newcommand{\CI}{{{\mathcal C}^\infty}}
\newcommand{\dCI}{{\dot{\mathcal C}^\infty}}
\newtheorem{prop}{Proposition}
\numberwithin{equation}{section}
\numberwithin{prop}{section}
\DeclareMathOperator{\Res}{Res}
\DeclareMathOperator{\comp}{comp}
\let\Im=\Imag
\let\Re=\Real
\DeclareMathOperator{\vol}{vol}
\title{Complex Higgs Oscillators}
\author{Haoren Xiong}
\email{xiong@math.berkeley.edu}
\address{Department of Mathematics, University of California,
Berkeley, CA 94720, USA}
\begin{document}

\begin{abstract}
    In this note we discuss the complex version of the Higgs oscillator on the hyperbolic space. The eigenvalues and resonances of the complex Higgs oscillator are computed in different examples in the hyperbolic setting. We also propose open problems like whether the complex absorbing potential (CAP) method works for asymptotically hyperbolic manifolds and finding hyperbolic analogues of the complex harmonic oscillator. 
\end{abstract}

\maketitle

\section{Introduction}

The Higgs oscillator \cite{HiggsPaper} (see also Pallares-Rivera--Kirchbach \cite{Higgs}) is considered as an analogue of the quantum harmonic oscillator on the hyperbolic plane. In this paper we discussed its complex version, in analogy to the complex harmonic oscillator in the Euclidean space studied by, among others, Davies \cite{Dav}.

Our motivation comes from the complex absorbing potential (CAP) method which has been used as a computational
tool for finding scattering resonances -- see Riss--Meyer \cite{RiMe} and Seideman--Miller \cite{semi} for an early treatment and Jagau et al \cite{Jag} for some recent developments. Zworski \cite{Zw-vis} showed that scattering resonances of $-\Delta+V$, $V \in L^\infty_{\comp}(\RR^n) $, are limits of eigenvalues of $-\Delta+V-i\epsilon x^2$ as $\epsilon\to 0+$ . That was extended to potentials which are exponentially decaying in \cite{xiong2020}. In addition, in \cite{xiong2021} the author extended it to black box non-compactly supported dilation analytic perturbations.

It is natural to ask if the same method works for the hyperbolic space and more generally, asymptotically hyperbolic manifolds. We can formulate the problem as follows: let $(M,g)$ be a complete Riemannian manifold of dimension $n+1$ with boundary $\partial M$ given by $\{\rho=0\}$ where $\rho: \overline{M}\to [0,\infty)$ is a $\CI$ function such that $d\rho\neq 0$ on $\partial M$, and $\rho>0$ on $M$. We assume that the metric $\rho^2 g$ extends to a smooth Riemannian metric on $\overline{M}$ and that $|d\rho|_{\rho^2 g} = 1$ on $\partial M$. Let $\Delta_g\geq 0$ be the Laplace--Beltrami operator for the metric $g$. Since the spectrum is contained in $[0,\infty)$ the operator $\Delta_g - n^2/4 - \lambda^2$ is invertible on $L^2(M,d\vol_g)$ for $\Im\lambda > n/2$. Hence we can define
\[
    R(\lambda) := (\Delta_g - n^2/4 - \lambda^2)^{-1} : L^2(M,d\vol_g) \to H^2(M,d\vol_g),\quad \Im\lambda>n/2. 
\]
Let $\dCI(M)$ denote functions which are extendable to smooth functions supported in $\overline{M}$. We note that elliptic regularity shows that $ R ( \lambda ) : \dCI ( M ) \to
\CI ( M )$, $ \Im\lambda  > n/2 $. The resolvent $R(\lambda):\dCI(M)\to \CI(M)$ continues meromorphically from $ \Im\lambda > n/2 $ to $ \CC$ with poles of finite rank, see Mazzeo--Melrose \cite{mm}, Guillarmou \cite{g}, Guillop\'e--Zworski \cite{GuZw} and Vasy \cite{vasy1},\cite{vasy2}. We denote the poles by $\Res(\Delta_g)=\{\lambda_j\}_{j=1}^\infty$. Does there exist a function $f$ such that the operator $\Delta_g - n^2/4 - i\epsilon f$, $\epsilon>0$, has discrete $L^2(M,d\vol_g)$ spectrum $\{\lambda_j(\epsilon)^2\}$ and that 
\[
    \lambda_j(\epsilon) \to \lambda_j,\quad\textrm{as } \epsilon\to 0+ .
\]
We remark that for finite volume surface with cusp ends this holds with $f(x)=d(x,x_0)^2$ where $d$ is the hyperbolic distance -- see \cite[Example 3]{xiong2021}.

In this note we want to propose this as an open problem. We first compare the situation to the
Euclidean case and then discuss hyperbolic analogues of the complex harmonic oscillator. In fact, the first obstacle is the lack of an analogue of the complex harmonic oscillator in the hyperbolic setting. In \cite{Zw-vis} and \cite{xiong2020}, the operator $\Delta-i\epsilon x^2$, $\epsilon>0$ plays an important role as it is an unbounded operator on $L^2(\RR^n)$ with a discrete spectrum given by $\{ e^{-i\pi/4} \sqrt{\epsilon}(2|\alpha|+n) : \alpha\in\mathbb{N}_0^n\}$, $|\alpha|:=\alpha_1+\cdots+\alpha_n$, see \cite{Dav}. On a hyperbolic manifold $(M,g)$, we aim to find a complex-valued function $f\in\CI(M)$ such that $\Delta_g + f$ is an operator on $L^2(M,d\vol_g)$ with discrete spectrum. Ideally, we should also require $f$ to be unbounded near infinity like the function $-i\epsilon x^2$ in the Euclidean case, which would provide the compactness of the resolvent $(\Delta_g - n^2/4 + f -z)^{-1} : L^2(M,d\vol_g)\to L^2(M,d\vol_g)$. However, it is hard to find a function $f$ satisfying all the requirements above. We will explore a candidate $f=\omega^2 \tanh^2 r$ where $\omega\in\CC$, $r$ is the hyperbolic radius. The operator $\Delta_g + \omega^2\tanh^2 r$ is called the Higgs oscillator in the hyperbolic space, whose spectrum and resonances can be explicitly computed, see \cite{Higgs} for more details. The drawback of this candidate is the boundedness of $f$ thus we lose the compacteness of the resolvent $(\Delta_g - n^2/4 + f -z)^{-1}$. We remark that it is still an open problem to find an ideal analogue of the complex harmonic oscillator in the hyperbolic setting. We hope that the following introduction could popularize this natural problem.

\section{Preliminaries: P\"oschl--Teller potentials}
\label{PT}

We recall the following definition from P{\"o}schl--Teller \cite{Teller}: the P{\"o}schl--Teller potential is defined on $\RR$ by
\[
    V_{\mu,\nu}(r) := \frac{\mu(\mu+1)}{\sinh^2 r} - \frac{\nu(\nu+1)}{\cosh^2 r},\quad r\in\RR,
\]
$V_{\mu,\nu}$ is a real potential if $\mu, \nu$ are taken in $-1/2 + i(0,\infty) \cup [-1/2,\infty)$. In this section we will focus on the case in which $V_{\mu,\nu}$ is complex-valued and review some properties of the Hamiltonian $D_r^2 + V_{\mu,\nu}(r)$ on the half line $(0,\infty)_r$. The following result is based on the analysis of $D_r^2 + V_{\mu,\nu}(r)$ in Guillop\'e--Zworski \cite[Appendix]{GuZw2}:
\begin{prop}
\label{prop:PTspectrum}
    The Schr\"odinger operator $D_r^2 + V_{\mu,\nu}$ (respectively $D_r^2 + V_{0,\nu}$) has $\RR^+$ as continuous spectrum. The determinant of the scattering matrix for $D_r^2 + V_{\mu,\nu}$ is given by the reflection coefficient
    \begin{equation}
    \label{eqn:scattering matrix}
        s_{\mu,\nu}^{PT} (k) = -\frac{\Gamma(ik)\Gamma((\mu+\nu-ik)/2 + 1)\Gamma((\mu-\nu-ik+1)/2)2^{-ik}}{\Gamma(-ik)\Gamma((\mu+\nu+ik)/2 + 1)\Gamma((\mu-\nu+ik+1)/2)2^{ik}},
    \end{equation}
    and for $D_r^2 + V_{0,\nu}$ given by 
    \begin{equation}
    \label{eqn:scattering matrix2}
        s_\nu^{PT} (k) = -\frac{\Gamma(ik)^2 \Gamma(\nu-ik + 1)\Gamma(-\nu-ik)}{\Gamma(-ik)^2 \Gamma(\nu + ik + 1)\Gamma(-\nu + ik)}.
    \end{equation}
    
    The Schr\"odinger operator $D_r^2 + V_{\mu,\nu}$ (resp. $D_r^2 + V_{0,\nu}$) has non-empty discrete spectrum if and only if $\Re (\nu-\mu) > 1$ (resp. $\Re\nu > 0$). The discrete spectrum is given by
    \[
        \begin{split}
            \sigma_d (D_r^2 + V_{\mu,\nu}) &= \{-(\nu-\mu-1-2n)^2 : n\in\NN,\, 2n<\Re(\nu-\mu-1)\}, \\
            \sigma_d (D_r^2 + V_{0,\nu}) &=
            \{-(\nu-n)^2 : n\in\NN,\,n<\Re\nu\}.
        \end{split}
    \]
\end{prop}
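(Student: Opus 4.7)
The plan is to reduce the eigenvalue equation $(D_r^2 + V_{\mu,\nu})\psi = k^2 \psi$ on $(0,\infty)_r$ to a hypergeometric equation, and then read off both the scattering matrix and the discrete spectrum from the standard connection formulas for ${}_2F_1$. This is essentially the analysis carried out in the appendix of \cite{GuZw2}. Since $V_{\mu,\nu}$ decays exponentially as $r\to\infty$ and the centrifugal-type singular coefficient $\mu(\mu+1)/\sinh^2 r$ admits a natural limit-point Friedrichs realization at $r=0$, Weyl's criterion immediately yields $\sigma_{\mathrm{ess}} = [0,\infty)$, so the only remaining work is to compute the scattering matrix and the eigenvalues.

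I would begin with a change of the independent variable -- for instance $z = \tanh^2 r$, mapping $(0,\infty)_r$ to $(0,1)_z$ with $r=0 \leftrightarrow z=0$ and $r=\infty \leftrightarrow z=1$ -- together with a gauge transformation of the form $\psi(r) = (\sinh r)^{\mu+1}(\cosh r)^{s(k)} F(z)$, where $s(k)$ is chosen so that the exponents at $z=1$ encode the Jost behaviour $e^{\pm ikr}$ at $r=\infty$. The $(\sinh r)^{\mu+1}$ factor removes the singular $\mu(\mu+1)/\sinh^2 r$ coefficient and makes $F$ regular at $z=0$. A direct computation then reduces the eigenvalue equation to the hypergeometric equation
\[
    z(1-z)F'' + \bigl(c-(a+b+1)z\bigr) F' - abF = 0,
\]
with parameters $a,b,c$ that are linear in $\mu,\nu$ and $ik$. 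The Frobenius-regular solution at $r=0$ is then $F(z) = {}_2F_1(a,b;c;z)$.

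Next I would apply the standard connection formula
\begin{multline*}
    {}_2F_1(a,b;c;z) = \tfrac{\Gamma(c)\Gamma(c-a-b)}{\Gamma(c-a)\Gamma(c-b)}\,{}_2F_1(a,b;a+b-c+1;1-z) \\
    + \tfrac{\Gamma(c)\Gamma(a+b-c)}{\Gamma(a)\Gamma(b)}(1-z)^{c-a-b}{}_2F_1(c-a,c-b;c-a-b+1;1-z)
\end{multline*}
to express the regular solution at $r=0$ as a linear combination $A(k)e^{-ikr}+B(k)e^{ikr}$ at $r=\infty$; this uses $1-z = \operatorname{sech}^2 r \sim 4e^{-2r}$ and collects the $2^{\pm ik}$ factors produced by the gauge together with the asymptotic expansion. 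The reflection coefficient is $s_{\mu,\nu}^{PT}(k) = -B(k)/A(k)$, and after identifying $c-a$, $c-b$, $a$, $b$ with the half-integer shifts $(\mu\pm\nu\pm ik+1)/2$ and $(\mu\pm\nu\pm ik)/2+1$, this ratio collapses to the compact form \eqref{eqn:scattering matrix}. The discrete spectrum is then the set of poles of $s_{\mu,\nu}^{PT}(k)$ in $\{\Im k > 0\}$; only $\Gamma((\mu-\nu-ik+1)/2)$ can produce such poles, giving $k = i(\nu-\mu-1-2n)$ with $2n < \Re(\nu-\mu-1)$ and the eigenvalues $-(\nu-\mu-1-2n)^2$. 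The existence condition $\Re(\nu-\mu)>1$ is exactly the requirement that $n=0$ lie in this range.

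For $V_{0,\nu}$, the point $r=0$ is a regular endpoint, so I would impose Dirichlet (rather than Frobenius-regular) boundary conditions there. The hypergeometric reduction is analogous, but the appropriate half-line solution at $r=0$ is now a different combination of the two Frobenius solutions, and re-running the connection-formula computation produces the shifted gamma arguments and the extra $\Gamma(ik)/\Gamma(-ik)$ factor in \eqref{eqn:scattering matrix2}; the discrete spectrum is then read off in exactly the same way. The main obstacle throughout is the bookkeeping of gauge factors and connection coefficients: one must correctly identify the Jost basis $e^{\pm ikr}$, with its $2^{\pm ik}$ normalization, in terms of the natural hypergeometric basis at $z=1$, and then simplify the resulting ratio of four gamma functions into the compact form stated in the proposition.
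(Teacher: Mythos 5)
Your treatment of the $\mu\neq 0$ operator follows the same route as the paper (gauge transformation, reduction to a hypergeometric equation, connection formula at infinity, reflection coefficient of the solution regular at $r=0$), but your handling of the $V_{0,\nu}$ case misses what \eqref{eqn:scattering matrix2} actually is. The reflection coefficient of the half-line problem with a Dirichlet condition at $r=0$ is nothing new: it is just \eqref{eqn:scattering matrix} with $\mu=0$, and a single half-line reflection coefficient carries a single factor $\Gamma(ik)/\Gamma(-ik)$ -- it can never produce the squared factor $\Gamma(\pm ik)^2$ in \eqref{eqn:scattering matrix2}. In the paper, $V_{0,\nu}$ is smooth and even, the operator is taken on the line and decomposed into the Dirichlet and Neumann half-line extensions; the Neumann channel is analyzed through $F_{0,\nu}(k)$ from \eqref{eqn:Fmunu}, and \eqref{eqn:scattering matrix2} is the \emph{determinant} of the resulting scattering matrix, i.e.\ the product $s_{0,\nu}^{PT}(k)\,s(H_\nu^N)(k)$, with the half-integer-shifted gammas merged by the duplication formula into $\Gamma(\nu\mp ik+1)$, $\Gamma(-\nu\mp ik)$. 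The same omission corrupts your discrete spectrum for this case: Dirichlet conditions alone give only the odd bound states, $\{-(\nu-1-2n)^2\}$, with non-emptiness requiring $\Re\nu>1$, whereas the stated $\sigma_d(D_r^2+V_{0,\nu})=\{-(\nu-n)^2:\ n<\Re\nu\}$ and the criterion $\Re\nu>0$ require both the Neumann and the Dirichlet channels.

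A second genuine problem is your bound-state criterion. Identifying $\sigma_d$ with the poles of $s_{\mu,\nu}^{PT}$ in $\{\Im k>0\}$ is not correct here: the factor $\Gamma(ik)$ in the numerator of \eqref{eqn:scattering matrix} has poles at $k=in$, $n\in\NN$, which lie in the upper half plane and are generically \emph{not} eigenvalues (they are the classical redundant poles coming from the $e^{-2r}$ decay of the potential), so your claim that only $\Gamma((\mu-\nu-ik+1)/2)$ can contribute is false, and your criterion as stated would also yield spurious eigenvalues $-n^2$ for every $n$. The correct condition, and the one the paper uses, is that the solution regular (or Dirichlet) at $r=0$ be $L^2$ at infinity, i.e.\ that the coefficient of the growing term $\sinh^{-ik}r$ in the expansion \eqref{eqn:EmunuAsymp} vanish; since $\Gamma(-ik)$ has no zeros, this forces $(\mu-\nu-ik+1)/2\in -\NN$ (the other gamma factor puts $k$ in the lower half plane for the relevant parameter range), which gives exactly $k=i(\nu-\mu-1-2n)$ with $2n<\Re(\nu-\mu-1)$. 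With that criterion, and with the even/odd decomposition restored in the $\mu=0$ case, your argument matches the paper's.
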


\begin{proof}
Through a conjugation by $\sinh^{\mu+1} r \cosh^{\nu +1} r$ and the change of variable $u=-\sinh^2 r$, the Schr{\"o}dinger equation
\begin{equation}
\label{eqn:Vmunu}
    D_r^2 \psi + V_{\mu,\nu} \psi - k^2 \psi = 0
\end{equation}
is reduced to the hypergeometric equation
\[
\begin{split}
    u(1-u)F''(u) &+ [(\mu+3/2) - (\mu+\nu+3)u]F'(u) \\
    &- [((\mu+\nu+2)/2)^2 + (k/2)^2]F = 0.
\end{split}
\]
The Schr{\"o}dinger equation \eqref{eqn:Vmunu} has the following independent solutions (if $\mu\neq -\frac{1}{2}$):
\begin{equation}
\label{eqn:Emunu}
    \begin{split}
        E_{\mu,\nu}(k)(r) = &\sinh^{1+\mu}r \cosh^{1+\nu}r \\
        &\times {_{2}F_1} ( (\mu+\nu-ik+2)/2,(\mu+\nu+ik+2)/2,\mu+\frac{3}{2};-\sinh^2 r ),
    \end{split}
\end{equation}
\begin{equation}
\label{eqn:Fmunu}
    \begin{split}
        F_{\mu,\nu}(k)(r) = &\sinh^{-\mu}r \cosh^{1+\nu}r \\
        &\times {_{2}F_1} ( (-\mu+\nu-ik+1)/2,(-\mu+\nu+ik+1)/2,\frac{1}{2}-\mu;-\sinh^2 r ).
    \end{split}
\end{equation}
The asymptotic expansion of \eqref{eqn:Emunu} at infinity is given, if $ik$ is not an integer, by
\begin{equation}
\label{eqn:EmunuAsymp}
    \begin{split}
        E_{\mu,\nu}(k)(r) \approx\, &\frac{\Gamma(\mu+ 3/2)\Gamma(ik)}{\Gamma((\mu+\nu+ik+2)/2)\Gamma((\mu-\nu+ik+1)/2)}\coth^{\nu+1}r \sinh^{ik}r \\
        &\times {_{2}F_1}( (\mu+\nu-ik+1)/2,(-\mu+\nu-ik+1)/2,1-ik;-\sinh^{-2} r ) \\
        &\frac{\Gamma(\mu+ 3/2)\Gamma(-ik)}{\Gamma((\mu+\nu-ik+2)/2)\Gamma((\mu-\nu-ik+1)/2)}\coth^{\nu+1}r \sinh^{-ik}r \\
        &\times {_{2}F_1}( (\mu+\nu+ik+1)/2,(-\mu+\nu+ik+1)/2,1+ik;-\sinh^{-2} r ),
    \end{split}
\end{equation}
recalling the definition of reflection coefficient for potential scattering (see for instance Dyatlov--Zworski \cite[\S 2.4]{res}), we obtain \eqref{eqn:scattering matrix}.

The potential $V_{0,\nu}$ is smooth on $\RR$, the operator $D_r^2 + V_{0,\nu}$ can be decomposed as the sum of the Dirichlet $(H_\nu^D)$ and Neumann $(H_\nu^N)$ extensions of $D_r^2 + V_{0,\nu}$. The eigenfunctions of the spectral resolution of $H_\nu^N$ are the $F_{0,\nu}(k)$ from \eqref{eqn:Fmunu} and a similar asymptotic expansion at infinity to \eqref{eqn:EmunuAsymp} gives the reflection coefficient $s(H_\nu^N)$. The scattering coefficient $s_\nu^{PT} (k)$ \eqref{eqn:scattering matrix2} is then the product $s_{0,\nu}^{PT} (k) s(H_\nu^N)(k)$.

The asymptotic properties of the eigenfunctions $\eqref{eqn:Emunu}$ and \eqref{eqn:Fmunu} determine the discrete spectra.
\end{proof}

\section{Higgs Oscillator on the Hyperbolic Plane}

We consider the hyperbolic plane $\HH := \{(x,y)\in\RR^2 : y>0\}$ with the metric $y^{-2}(dx^2 + dy^2)$. Instead of coordinates $(x,y)$, we will use the geodesic normal coordinates for hyperbolic metrics. These are coordinates $(r,\varphi)$ for which the $r$-coordinate curves are unit speed geodesics and the $\varphi$-coordinate curves are geodesic circles. The Laplacian $\Delta_{\HH^2} = y^2 (D_x^2 + D_y^2) = D_r^2 - i\coth r D_r + \sinh^{-2}r D_\varphi^2$, where $D_x = i^{-1}\partial_x$, is through conjugation by $\sinh^{1/2} r$, equivalent to the operator
\[
    D_r^2 + \sinh^{-2}r (D_\varphi^2 - 1/4) + 1/4.
\]

We define the complex version of Higgs Oscillator by $\Delta_{\HH^2}+\omega^2 \tanh^2 r$, where $\omega$ is a complex number, which is through the same conjugation as above, equivalent to the operator
\[
    D_r^2 + \frac{D_\varphi^2 - 1/4}{\sinh^2 r} - \frac{\omega^2}{\cosh^2 r} + \omega^2 + \frac{1}{4}
\]
on $L^2((0,\infty)_r \times S_\varphi^1,drd\varphi)$. We can expand this in terms of the eigenfunctions on $S_\varphi^1$ to obtain
\[
    \bigoplus_{m\in\ZZ} D_r^2 + \frac{m^2 - 1/4}{\sinh^2 r} - \frac{\omega^2}{\cosh^2 r} + \omega^2 + \frac{1}{4}.
\]
This leads to the one-dimensional Schr\"odinger operator with P\"oschl--Teller potential $D_r^2 + V_{\mu,\nu}$, where $\mu = |m| - 1/2$ and $\nu=\sqrt{\omega^2 + 1/4} - 1/2$. It follows from Proposition \ref{prop:PTspectrum} that the eigenvalues of $D_r^2 + V_{\mu,\nu}$ are $\{(\nu-\mu-1-2n)^2:n\in\NN,\, 2n<\Re(\nu-\mu-1)\}$. Hence we obtain the discrete spectrum of $\Delta_{\HH^2}+\omega^2\tanh^2 r$:
\[
    \left\{ \omega^2 + \frac{1}{4} - \big( \sqrt{\omega^2 +\frac{1}{4}}-m-1-2n \big)^2 : m,n\in\NN,\, 2n< \Re\sqrt{\omega^2+\frac{1}{4}}-m-1 \right\}.
\]
The scattering matrix \eqref{eqn:scattering matrix} gives the resonances of $\Delta_{\HH^2}+\omega^2\tanh^2 r$: 
\[
    \left\{ \omega^2 + \frac{1}{4} - \big( \sqrt{\omega^2 +\frac{1}{4}}-m-1-2n \big)^2 : m,n\in\NN \right\}.
\]
\begin{figure}
\includegraphics[width=6in]{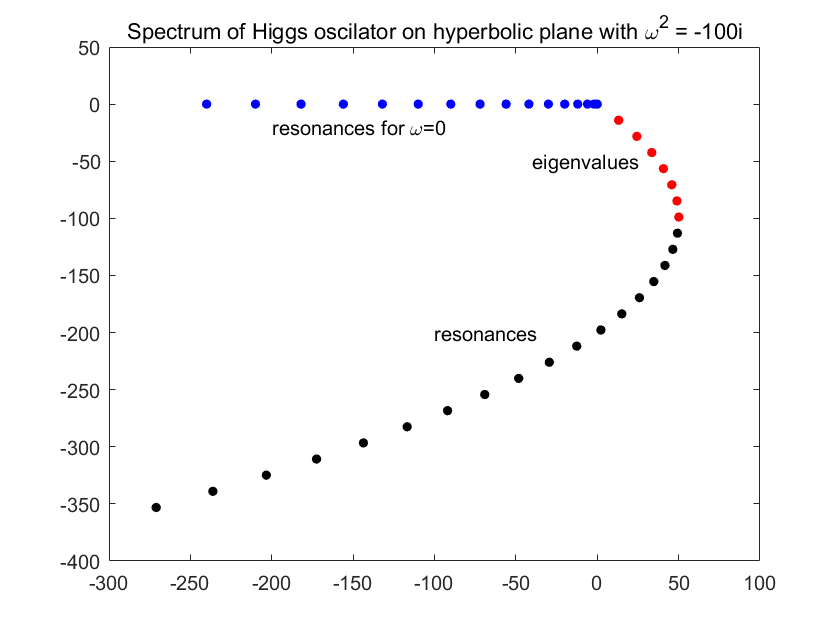}
\caption{The spectrum of the Higgs osicllator on the hyperbolic plane. The red dots are the eigenvalues of $\Delta_{\HH^2}+\omega^2 \tanh^2 r$ with $\omega^2 = -100i$ while the black dots are its resonances. We also plot the resonances of $\Delta_{\HH^2}$, which are the blue dots on the real axis. This shows the deformation of resonances.  }
\end{figure}

\section{Higgs Oscillator with an Eckart barrier}
We consider the one-dimensional Eckart barrier $V=\alpha \cosh^{-2} r$. The Higgs oscillator with an Eckart barrier is given by
\[
    D_r^2 + V +\omega^2 \tanh^2 r = D_r^2 + (\alpha -\omega^2)\cosh^{-2}r + \omega^2.
\]
This can be viewed as a Schr\"odinger operator with P\"oschl--Teller potential $D_r^2 + V_{0,\nu}$ shifted by a constant $\omega^2$, where $\nu = \sqrt{\omega^2 - \alpha + \frac{1}{4}}-\frac{1}{2}$. It follows from Proposition \ref{prop:PTspectrum} that the discrete spectrum of the Higgs oscillator with a Eckart barrier is given by
\[
    \left\{ \omega^2 - \big( \sqrt{\omega^2 -\alpha +\frac{1}{4}} - \frac{1}{2} - n \big)^2 : n\in\NN,\, n< \Re\sqrt{\omega^2+\frac{1}{4}} - \frac{1}{2} \right\}.
\]
The scattering matrix \eqref{eqn:scattering matrix2} gives the resonances:
\[
    \left\{ \omega^2 - \big( \sqrt{\omega^2 -\alpha +\frac{1}{4}} - \frac{1}{2} - n \big)^2 : n\in\NN\right\}.
\]

\section{Higgs Oscillator on hyperbolic half-cylinder}
We consider the hyperbolic half-cylinder $Y_{0l}\simeq (0,\infty)_r\times (\RR/l\ZZ)_\theta$ with metric $dr^2+\cosh^2 r d\theta^2$. The Laplacian (Dirichlet boundary condition) $\Delta_{Y_{0l}}=D_r^2 - i\tanh r D_r + \cosh^{-2}\Delta_{\RR/l\ZZ}$ is, through a conjugation by $\cosh^{1/2}r$, equivalent to the operator
\[
    D_r^2 + \frac{\Delta_{\RR/l\ZZ}+1/4}{\cosh^2 r} +\frac{1}{4}.
\]
The Higgs oscillator $\Delta_{Y_{0l}}+\omega^2\tanh^2 r$, is then equivalent to the operator
\[
    D_r^2 - \frac{\omega^2 -\Delta_{\RR/l\ZZ}-1/4}{\cosh^2 r} + \omega^2 + \frac{1}{4},
\]
which admits the following expansion:
\[
     \bigoplus_{m\in\ZZ} D_r^2 - \frac{\omega^2-(2\pi m/l)^2 - 1/4}{\cosh^2 r} + \omega^2 + \frac{1}{4}.
\]
The corresponding one-dimensional one-dimensional Schr\"odinger operator is $D_r^2 + V_{0,\nu}$ on $(0,\infty)$ with Dirichlet boundary condition, where we put $\nu = \sqrt{\omega^2 - (2\pi m/l)^2}-1/2$. Hence by Proposition \ref{prop:PTspectrum} the discrete specrtum of $\Delta_{Y_{0l}}+\omega^2 \tanh^2 r$ is
\[
\begin{gathered}
    \{ \omega^2 + 1/4 - \big( \sqrt{\omega^2 -(2\pi m/l)^2}-2n-3/2 \big)^2 : m\in\ZZ, \\
    n\in\NN,\,2n< \Re\sqrt{\omega^2 -(2\pi m/l)^2}-3/2 \},
\end{gathered}
\]
while the analysis of \eqref{eqn:scattering matrix} gives the resonances:
\[
    \{ \omega^2 + 1/4 - \big( \sqrt{\omega^2 -(2\pi m/l)^2}-2n-3/2 \big)^2 : m\in\ZZ, 
    n\in\NN \}.
\]
\begin{figure}
\includegraphics[width=6in]{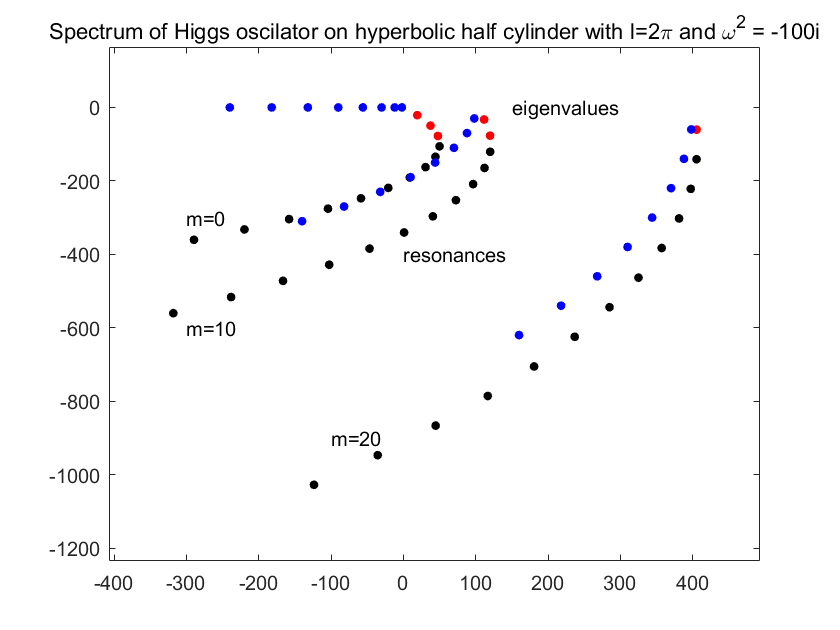}
\caption{The spectrum of the Higgs osicllator on hyperbolic half-cylinder with parameter $l=2\pi$ and $\omega^2 = -100i$. We only plot the spectrum with respect to the Fourier modes $m=0,10,20$, here the red dots are eigenvalues and the black dots are resonances. We also plot resonances for $\omega=0$ with respect to the same Fourier modes to show the deformation of resonances.}
\end{figure}

\section{Discussion}

The explicit formulae show that the resonances in all cases are deformed and some do become eigenvalues. However, in this setting we cannot obtain resonances as limits of these eigenvalues, as one would want for CAP method. In fact, for $\omega$ with small modulus there are no eigenvalues for the complex Higgs oscillators at all. 

\def\arXiv#1{\href{http://arxiv.org/abs/#1}{arXiv:#1}}

\end{document}